\newcolumntype{L}[1]{>{\raggedright\let\newline\\\arraybackslash\hspace{0pt}}m{#1}}
\newcolumntype{C}[1]{>{\centering\let\newline\\\arraybackslash\hspace{0pt}}m{#1}}
\newcolumntype{R}[1]{>{\raggedleft\let\newline\\\arraybackslash\hspace{0pt}}m{#1}}
\DeclareMathOperator{\sw}{sw}
\DeclareMathOperator{\dist}{dist}
\DeclareMathOperator{\plu}{plu}
\DeclareMathOperator{\boto}{bot}
\theoremstyle{plain}
\newtheorem{theorem}{Theorem}
\newtheorem{lemma}[theorem]{Lemma}
\newtheorem{definition}{Definition}
\title{A Note on Rules Achieving Optimal Metric Distortion}
\author{Jannik Peters \footnote{jannik.peters@tu-berlin.de} }
\date{\today}
\begin{document}

\maketitle

\begin{abstract}
    In this note, we uncover three connections between the metric distortion problem and voting methods and axioms from the social choice literature. 
\end{abstract} 
\section{Introduction}
The distortion of voting rules was introduced by \citet{PrRo06a} as a way to implicitly measure the utilitarian performance of a voting rule. The distortion of a voting rule given some ordinal preference profile, is the worst-case approximation factor to the social welfare this voting rule can achieve, among all utility profiles consistent with the ordinal preferences. Without any additional assumptions on these utility profiles, no voting rule can achieve any constant factor guarantees. This motivated \citet{anshelevich2018approximating} to study the special case of metric utility profiles, i.e., utility profiles for which a metric space exists, in which the voters and candidates lie, such that the utility of a voter for a candidate is precisely the distance between them.  For such utility profiles, a lower bound of $3$ on the distortion of any rule was quickly established, while it was also shown that the Copeland rule (in fact, any rule selecting from the uncovered set) achieves a distortion of $5$. This upper bound of $5$ was subsequently lowered by \citet{kempe2020analysis} and \citet{munagala2019improved}, until finally \citet{GHS20a} achieved an upper bound of $3$ through a rather ``complex'' rule dubbed Plurality Matching. This rule was subsequently simplified by \citet{KiKe22a} to the so-called Plurality Veto rule.

In this paper, we show that the Plurality Matching and Plurality Veto rules are related closely to three different settings in social choice: (i) The Proportional Veto Core by \citet{Mou81a}, (ii) the concept of Proportionality for Solid Coalitions from multiwinner voting (see, for instance, \citet{AzLe20a} or \citet{BrPe23a}), and (iii) the Matching under Preferences setting \citep{BoMo01a}. In fact, we give an equivalent formulation of the Plurality Matching concept in all three settings. Using this equivalence, the Plurality Veto rule, the Vote-by-Veto rule \citep{Mou81a}, and the Serial Dictatorship rule \citep{AbSo98a} are instantiations of the same rule. Further, Phragmén's Ordered Rule \citep{Jans16a}, Probabilistic Serial \citep{BoMo01a}, and the ``Veto by consumption'' rule \citep{IaKo21a} can also be treated as instantiations of the same rule and thus lead to a new rule with an optimal metric distortion of $3$.

\section{Notation}

Throughout the paper, we assume that we are given a set $N = [n]$ of voters and a set $C = \{c_1, \dots, c_m\}$ of candidates. Further, each voter $i \in N$ has a strict and complete \emph{preference list} $\succ_i \subseteq C \times C $ with $\succ = (\succ_i)_{i \in N}$ forming the preference profile. Together, $(N,C, 
\succ)$ form our election instance. For a given profile $\succ$, we let $\succ^R$ be the preference profile such that $c \;\succ^R_i\; c'$ if and only if $c' \succ_i c$, i.e., the reversed preference profile. Further, we write $c_i \succeq c_j$ if and only if $c_i \succ c_j$ or $c_i = c_j$. For a set $N' \subseteq N$ of voters and sets $C', C''\subseteq C$ of candidates we write that $C' \succ_{N'} C''$ if and only if $c' \succ_i c''$ for all $i \in N'$, $c' \in C'$, and $c'' \in C''$. 

Additionally, we need the notion of cloned candidates. For a given  frequency function $f\colon C \to \mathbb N_0$, the instance $(N, C^f, \succ^f)$ is the instance in which every candidate $c$ is replaced by $f(c)$ clones in every preference list. The ranking between the clones is arbitrary but fixed across voters, and candidates with $f(c) = 0$ are deleted. We denote by $\plu(c)$ the plurality score of candidate $c$, by $\boto_i(C')$ the worst ranked candidate $c \in C'$ by voter $i$. Hence, the instance $(N, C^{\plu}, \succ^{\plu})$ denotes the instance in which every candidate is cloned as many times as they appear first in any preference profile. We not that for this special instance $n = m$ holds.

For any $c \in C$ and $N' \subseteq N$ we define \[D_c(N') \coloneqq \{c' \in C \colon c \succeq_i c' \text{ for some } i \in N'\}\] as the set of all candidates which are not ranked better than $c$ by at least one voter in $N'$. Note that, by definition, $c \in D_c(N')$ for all $c \in C$ and non-empty $N' \subseteq N$.

We say that a utility function $u\colon C \to \mathbb{R}^+$ is consistent with a preference list $\succ$ if $c_i \succ c_j$ implies $u(c_i) \le u(c_j)$, i.e., the goal is to select the candidate with the lowest social cost. A utility profile $U = (u_i)_{i \in N}$ is a collection of utility functions such that $u_i$ is consistent with $\succ_i$. For a given utility profile $U$ and candidate $c$ the \emph{social welfare} of $c$ is $\sw(c, U) \coloneqq \sum_{i \in N} u_i(c)$.

We call a utility profile \emph{metric} if there exists a metric space $(N \cup C,d)$ with $u_i(c) = d(i,c)$, i.e., if there is a function $d\colon (N \cup C)^2 \to \mathbb{R}^+$ such that for any $i,j,k \in N \cup C$ it holds that (i) $d(i,j) = 0$ if and only if $i = j$, (ii) $d(i,j) = d(j,i)$, and (iii) $d(i,k) \le d(i,j) + d(j,k)$.

Given a preference profile $\succ$, let $\mathcal{U}(\succ)$ be the set of all metric utility profiles consistent with the preference profile. The \emph{metric distortion} of a given candidate is now the worst-case social welfare ratio of this candidate compared to the social welfare winner, over all consistent metric utility profiles, that is 
\[
\dist(c, \succ) = \sup_{U \in \mathcal{U}(\succ)} \frac{\sw(c, U)}{\min_{c' \in C} \sw(c', U)}.
\]
The metric distortion of a voting rule is the worst-case distortion of any candidate the rule selects over all possible preference profiles.
\section{Plurality Matching and Domination Graphs}
To tackle the problem of designing a voting rule achieving a metric distortion of exactly $3$, \citet*{GHS20a} introduced the Plurality Matching rule, based on the so-called domination graph.
\begin{definition}
Given a candidate $c$, the domination graph $G(c) = (V \cup C, E_c)$ is the bipartite graph consisting of the set of voters on the one side, the set of candidates on the other side, with an edge between voter $i$ and candidate $c'$ if and only if $c \succeq_i c'$.
\end{definition}
A \emph{fractional perfect matching} is a function $M \colon E_c \to \mathbb{R}_{\ge 0}$ with $\sum_{e \in \delta(i)} M(e) = 1$ for all $i \in V$ and $\sum_{e \in \delta(c)} M(e) = \frac{n}{m}$ for all $c \in C$. 
As one of their main contributions, \citet{GHS20a} were able to show that there is always a candidate $c$ whose domination graph has a fractional perfect matching.
\begin{theorem}[\citep{GHS20a}]
For any election, there is a candidate $c$ such that $G(c)$ admits a fractional perfect matching. Such a candidate can be found in polynomial time. \label{thm:mat_dom_ex}
\end{theorem}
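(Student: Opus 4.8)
The plan is to treat the two assertions separately, since the polynomial-time claim is by far the easier one. For a \emph{fixed} candidate $c$, deciding whether $G(c)$ admits a fractional perfect matching is a feasibility question for a balanced transportation problem: voters carry supply $1$, candidates carry demand $\tfrac nm$, and $\sum_{i\in N}1 = n = \sum_{c'\in C}\tfrac nm$. A single max-flow computation (source $\to$ voters with capacity $1$, voter–candidate edges with capacity $\infty$, candidates $\to$ sink with capacity $\tfrac nm$) settles it, so once existence is established one simply tests all $m$ candidates. Moreover, max-flow/min-cut yields a clean Hall-type criterion: since the neighbourhood of a voter set $S \subseteq N$ in $G(c)$ is exactly $D_c(S)$, any finite cut is obtained by placing $S$ together with $D_c(S)$ on the source side, with value $(n-|S|)+\tfrac nm|D_c(S)|$; comparing its minimum with $n$ shows that $G(c)$ admits a fractional perfect matching if and only if $|D_c(S)| \ge \tfrac{m}{n}|S|$ for every $S \subseteq N$.

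The heart of the argument is to exhibit one candidate meeting this criterion, and I would do so \emph{constructively} through a continuous veto/elimination process, which simultaneously makes the polynomial-time claim transparent. Give each voter one unit of ``veto mass'' and let the surviving set start as $A=C$. Run a continuous process in which each voter $i$ pours its remaining mass onto $\boto_i(A)$, its worst surviving candidate; while $|A|\ge 2$, a candidate is deleted from $A$ the instant its absorbed mass reaches $\tfrac nm$, at which point the affected voters redirect onto their new worst surviving candidate. Stop when $A=\{c^*\}$ and let all residual mass flow onto the unique survivor $c^*$; finally set $M(i,c')$ to be the total mass that voter $i$ poured onto $c'$.

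I claim $M$ is a fractional perfect matching of $G(c^*)$. \textbf{Edge validity:} whenever $i$ pours onto some $c'$, the survivor $c^*$ still lies in $A$ and $c'$ is the worst element of $A$ for $i$, so $c^*\succeq_i c'$, giving $(i,c')\in E_{c^*}$. \textbf{Degrees:} each voter pours its full unit, since total mass $n$ equals total absorption capacity $m\cdot\tfrac nm = n$, so $\sum_{e\in\delta(i)}M(e)=1$; the $m-1$ deleted candidates each absorbed exactly $\tfrac nm$, and a conservation count leaves precisely $n-(m-1)\tfrac nm=\tfrac nm$ for $c^*$, so $\sum_{e\in\delta(c')}M(e)=\tfrac nm$ for every $c'$. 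Because the process has at most $m-1$ breakpoints, each computable in polynomial time, this yields both existence and findability.

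The main obstacle is making the dynamical process rigorous: one must specify the instantaneous flow rates so that the evolution of $A$ and of the absorbed masses is well-defined, argue termination with exactly one survivor, and handle ties when several candidates hit capacity $\tfrac nm$ simultaneously (resolved by a fixed deletion order, which never disturbs edge validity because $c^*$ stays in $A$ throughout). The conservation accounting for $c^*$ — confirming it receives exactly $\tfrac nm$ rather than less — is the one point where the exact balance $\sum_{i}1=\sum_{c'}\tfrac nm$ is essential. An alternative would bypass the construction and verify the Hall-type criterion for a cleverly chosen extremal candidate; but controlling that criterion over all $2^n$ subsets $S$ seems more delicate than the mass-conservation bookkeeping above.
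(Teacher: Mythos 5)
Your proposal is correct, but note that it is not an independent route so much as a reconstruction of the \emph{second-generation} proof of this theorem: the continuous pouring process you describe, with per-candidate capacity $n/m$ and elimination of a voter's worst surviving candidate, is precisely the fractional veto rule of \citet{KiKe22a} (equivalently, the veto-by-consumption rule of \citet{IaKo21a}), which this paper explicitly flags as a way to prove \Cref{thm:mat_dom_ex}; the paper itself only cites the theorem and gives no proof. The original argument of \citet{GHS20a} is quite different and considerably more involved (an existence proof for plurality matchings that does not proceed by a simple elimination process), which is exactly why the later veto-based proofs were considered a simplification --- your instinct that verifying the Hall condition directly for a cleverly chosen candidate is ``more delicate'' matches the historical record. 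Your three verification points are the right ones, and the gaps you flag do close: between breakpoints the absorbed masses are piecewise linear, so the process is well defined with at most $m-1$ polynomial-time-computable events; and termination with exactly one survivor follows from the conservation argument you sketch, since if only $j \le m-2$ candidates had been deleted when the total mass $n$ is exhausted, each of the $m-j \ge 2$ survivors would hold strictly less than $n/m$, giving $n < j \cdot n/m + (m-j)\cdot n/m = n$, a contradiction. One small phrasing slip: voters pour their full unit \emph{by construction} (residual mass is dumped on $c^*$); the balance $n = m \cdot n/m$ is not the reason for that, but rather the reason $c^*$ ends with exactly $n/m$.
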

Whether $G(c)$ admits a fractional perfect matching can be verified in polynomial time using a generalization of Hall's Theorem. 
\begin{lemma}[\citep{GHS20a}]
Given a candidate $c$, the domination graph $G(c)$ admits a fractional perfect matching if and only if for any $N' \subseteq N$, it holds that $\lvert D_c(N')\rvert \ge \frac{m \lvert N'\rvert}{n}$.
\label{lemma:hall}
\end{lemma}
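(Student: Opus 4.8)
The statement is a weighted (fractional) analogue of Hall's marriage theorem, so the natural plan is to phrase the existence of a fractional perfect matching in $G(c)$ as a maximum-flow problem and invoke max-flow min-cut. The easy direction (a matching forces the Hall condition) can be dispatched by a direct counting argument, whereas the converse (the Hall condition forces a matching) is where the real work lies.

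First I would prove necessity directly. Suppose $M$ is a fractional perfect matching and fix any $N' \subseteq N$. Every voter in $N'$ sends total weight exactly $1$ along its incident edges, so the total weight leaving $N'$ equals $|N'|$. By definition of $G(c)$, every such edge lands in $D_c(N')$, and each candidate absorbs total weight at most $\frac{n}{m}$ over all voters. Hence $|N'| = \sum_{i \in N'} \sum_{e \in \delta(i)} M(e) \le \sum_{c' \in D_c(N')} \frac{n}{m} = \frac{n}{m}\,|D_c(N')|$, which rearranges to $|D_c(N')| \ge \frac{m|N'|}{n}$.

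For sufficiency, I would build a flow network with a source $s$, a sink $t$, an arc $s \to i$ of capacity $1$ for each voter $i$, an arc $c' \to t$ of capacity $\frac{n}{m}$ for each candidate $c'$, and an arc $i \to c'$ of capacity $+\infty$ whenever $\{i,c'\} \in E_c$. A fractional perfect matching is exactly a feasible flow saturating every source and sink arc, i.e.\ a flow of value $n$ (the common total capacity on each side); since $n$ is the maximum conceivable value, a matching exists iff the maximum flow equals $n$. By max-flow min-cut it then suffices to show that every $s$--$t$ cut has capacity at least $n$.

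Finally I would evaluate the minimum cut. In any finite cut the infinite-capacity middle arcs cannot be severed, so writing $A \subseteq N$ for the voters on the source side forces all of $D_c(A)$ onto the source side as well; a minimal cut keeps exactly those candidates there. Its capacity is then $(n - |A|) + \frac{n}{m}\,|D_c(A)|$, which is at least $n$ precisely when $\frac{n}{m}\,|D_c(A)| \ge |A|$, i.e.\ when $|D_c(A)| \ge \frac{m|A|}{n}$. Thus the Hall condition over all $A \subseteq N$ is equivalent to every cut having capacity at least $n$, and hence to the existence of the matching. The step I expect to require the most care is the cut reduction itself: arguing that a minimum cut never pays for a middle arc and retains only $D_c(A)$ on the source side, so that the cut value collapses to the clean expression above. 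Once this is in place the computation is routine; an alternative route, closer in spirit to the cloning machinery introduced earlier, would be to scale demands by $m$ to an integral $b$-matching and appeal to total unimodularity together with classical Hall.
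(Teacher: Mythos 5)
Your proof is correct. Note that this paper does not prove the lemma at all---it is imported from \citet{GHS20a} as a black box---so there is no in-paper argument to compare against; your max-flow/min-cut reduction is the standard way to establish this fractional generalization of Hall's theorem, and it is in the same spirit as the argument in the cited source. Both directions are sound: the counting argument for necessity is valid (every unit of voter weight lands in $D_c(N')$, and each candidate absorbs exactly $\frac{n}{m}$), and in the sufficiency direction the delicate step you flagged is handled correctly: a finite cut cannot sever an infinite middle arc, so with $A$ the voters on the source side all of $D_c(A)$ must also lie on the source side, any additional candidates only increase the cut, and the resulting capacity $(n-\lvert A\rvert) + \frac{n}{m}\lvert D_c(A)\rvert$ is at least $n$ for all $A$ exactly when the Hall-type condition holds, while a violating $A$ conversely yields a cut of capacity below $n$.
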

To relate this to the distortion problem, \citet{GHS20a} further showed that for the special case in which candidates are cloned according to their plurality score, any candidate admitting such a fractional perfect matching has metric distortion of at most $3$.
\begin{theorem}[\citep{GHS20a}]
    Given an instance $(N,C,\succ)$, consider the instance $(N, C^{\plu}, \succ^{\plu})$ in which every candidate $c$ appears as many times as they are ranked first by a voter. Any candidate $c$ whose domination graph $G(c)$ admits a fractional perfect matching in the instance $(N, C^{\plu}, \succ^{\plu})$, has metric distortion of at most $3$ in $(N,C,\succ)$. \label{thm:gk}
\end{theorem}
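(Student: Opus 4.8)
The plan is to fix an arbitrary metric utility profile $U \in \mathcal{U}(\succ)$ and to compare $c$ against the social welfare minimizer $w$; since $U$ and $w$ are arbitrary, it suffices to show $\sw(c, U) \le 3\,\sw(w, U)$. Write $M$ for the fractional perfect matching of $G(c)$ in the cloned instance $(N, C^{\plu}, \succ^{\plu})$, and recall that here the number of clones equals $n$, so that every voter and every clone carries total matching weight exactly $1$. I would also use throughout that a clone of an original candidate $a$ sits at the same point of the metric space as $a$, so its distance to a voter $i$ equals $d(i,a)$.

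First I would rewrite the social cost of $c$ using the voter-side degree constraint $\sum_{c'} M(i,c') = 1$, namely $\sw(c,U) = \sum_{i} d(i,c) = \sum_{(i,c') \in E_c} M(i,c')\, d(i,c)$. The crucial move is then to replace $d(i,c)$ by $d(i,c')$ on each matched edge. Here one must translate the edge condition from the cloned instance back to the original: an edge $(i,c')$ exists only if $c \succeq_i c'$ in $\succ^{\plu}$, and since all clones of a fixed candidate form a contiguous block in each list, this forces $c \succeq_i a$ in the original profile, where $a$ is the candidate cloned into $c'$. Consistency of $U$ then yields $d(i,c) \le d(i,a) = d(i,c')$, so $\sw(c,U) \le \sum_{(i,c')} M(i,c')\, d(i,c')$. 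I expect this clone-to-original translation to be the main point to get right, since everything downstream depends on each matched edge certifying a genuine domination in $\succ$.

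Next I would apply the triangle inequality $d(i,c') \le d(i,w) + d(w,c')$ on each matched edge and split the resulting sum, using the voter-side constraint on the term $d(i,w)$ and the clone-side constraint $\sum_i M(i,c') = 1$ on the term $d(w,c')$, to obtain $\sw(c,U) \le \sw(w,U) + \sum_{c'} d(w,c')$, where the last sum ranges over all clones. The purpose of plurality cloning is that this residual sum equals $\sum_{a \in C} \plu(a)\, d(w,a) = \sum_{j \in N} d(w, t_j)$, where $t_j$ denotes the top-ranked candidate of voter $j$; indeed, each candidate $a$ contributes $\plu(a)$ clones at distance $d(w,a)$, and $\plu(a)$ counts exactly the voters whose favourite is $a$.

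Finally, for each voter $j$ I would bound $d(w, t_j) \le d(w,j) + d(j,t_j) \le 2\,d(j,w)$, where the second inequality uses that $t_j$ is $j$'s favourite, so consistency gives $d(j,t_j) \le d(j,w)$. Summing over $j$ gives $\sum_{c'} d(w,c') \le 2\,\sw(w,U)$, and combining this with the previous bound yields $\sw(c,U) \le 3\,\sw(w,U)$, as desired.
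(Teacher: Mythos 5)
Your proof is correct and follows essentially the same route as the argument of \citet{GHS20a} for this theorem (which the present paper cites without reproving): use the doubly stochastic structure of the fractional perfect matching to charge each voter to the plurality-clones it dominates, route through the optimum $w$ via the triangle inequality, and bound each top-choice term by $d(w,t_j)\le 2\,d(j,w)$. The clone-to-original translation you single out (an edge $(i,c')$ certifying $c\succeq_i a$ for the underlying candidate $a$, using that clone blocks are contiguous) is indeed the one delicate step, and you handle it correctly.
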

We call such a candidate a \emph{plurality matching winner}. As such a candidate always exists according to \Cref{thm:mat_dom_ex}, there is also always a candidate with metric distortion of at most $3$. However, while this candidate always exists, selecting it via simply iterating over all candidates and checking whether their domination graph admits such a matching, is somewhat unsatisfying and unnatural when comparing it to other voting rules, especially as it is not obvious (without the involved proof of \citet{GHS20a}), that such a candidate exists.

This motivated \citet{KiKe22a} to introduce a simple rule which can always select a plurality matching winner: The plurality veto rule. 
\begin{algorithm}[t]
\caption{Plurality Veto Rule}
\label{alg:pvr}
Create instance $(N, C^{\plu}, \succ^{\plu})$\;
\For{$i \in [n-1]$}{
$C^{\plu} \gets C^{\plu} \setminus \boto_i(C^{\plu})$ 
}
 return last remaining candidate in $C^{\plu}$\;
\end{algorithm}

We state the rule in Algorithm~\ref{alg:pvr} for the special case of $n = m$ in the instance $(N, C^{\plu}, \succ^{\plu})$.  For the general version, called fractional veto rule, which can also be used to show \Cref{thm:mat_dom_ex}, see the paper by \citet{KiKe22a}.

\section{The Proportional Veto Core}

Unrelated to the previous distortion considerations, \citet{Mou81a, Moul82a} introduced the proportional veto core, which we present here in the formulation of \citet{IaKo21a}. The goal of the proportional veto core is to give groups of voters the right to veto candidates they do not like. The power to veto should be distributed in a proportional fashion: an $\alpha$-fraction of the voters should be able to veto an $\alpha$-fraction of the candidates. To formalize this, for any group of voters $N' \subseteq N$, we say that the \emph{veto power} of $N'$ is $v(N') = \lceil \frac{m \lvert N'\rvert}{n}\rceil - 1$. A candidate $c$ is now considered to be vetoed, if there is some group $N'$ of voters which rank at least $m - v(N')$ candidates all better than $c$, that is, they could veto $v(N')$ candidates together with $c$. 
\begin{definition}
    A candidate $c$ is \emph{vetoed} if there is some set $N' \subseteq N$ of voters and set $C' \subseteq C$ of candidates with $c \notin C'$ such that $\lvert C'\rvert \ge m - v(N')$ and $C' \succ_{N'} C \setminus C'$.
    The \emph{proportional veto core} consists of all alternatives which are not vetoed. 
\end{definition}
The proportional veto core is non-empty \citep{Mou81a, Moul82a} and can be computed using the voting by veto tokens rule introduced by \citet{Mou81a} and further analyzed by \cite{IaKo21a}. This rule works exactly the same as the fractional veto rule of \cite{KiKe22a} and can be computed in polynomial time \citep{IaKo21a}. Additionally, both \citet{Moul83a} and \citet[Proposition~21]{IaKo21a} showed that a candidate is in the proportional veto core if and only if it can be computed by the voting by veto tokens rule. 

Further, \citet{IaKo21a} introduced the ``Vote by consumption'' rule, which also selects from the proportional veto core. Intuitively, their rule works as follows: all voters simultaneously and with the same speed give score to their lowest ranked candidate. Once a candidate reaches a score of $1$ this candidate is eliminated and the voters move on to their next worst candidate. The last candidate remaining wins. As pointed out by \citet{IaKo21a}, this rule is anonymous and can be computed in polynomial time as well.

\section{Proportionality for Solid Coalitions}
Next, we turn to another equivalent formulation of the proportional veto core. For this, we use the well-known notion of \emph{proportionality for solid coalitions (PSC)} \citep{Dumm84a}. It was introduced in the context of proportional committee selection with ranked preferences, as a way to axiomatically measure whether a committee is proportional. In the committee selection problem, instead of selecting a single winner, a set of $k$ winners is selected. Sets $W \subseteq C$ are referred to as committees. 
\begin{definition}
    A committee $W$ of size $k$ satisfies weak Droop proportionality for solid coalitions (weak-PSC) if there is no group $N'$ of voters and set $C'$ of candidates with $\lvert N' \rvert > \frac{\lvert C' \rvert n}{k+1} $ such that $C' \setminus W \neq \emptyset$ and $C' \succ_{N'} C \setminus C'$. \label{def:solid_coalition}
\end{definition}
In a nutshell, if a group of voters agrees on a prefix in their rankings and could afford this prefix, the candidates in the prefix need to be included in the committee. Rules satisfying this notion include Phragmén's sequential rule for ordinal ballots \citep{Jans16a}, the Droop variants of the expanding approvals rule \citep{AzLe20a}, and the well-studied Single Transferable Vote (STV) \citep{Tide95a}.

Interestingly enough, Phragmén's sequential rule for ordinal ballots behaves equivalently to veto-by-consumption run on the reversed preferences profile: the first $k$ candidates discarded by veto-by-consumption on profile $\succ^R$, are exactly the $k$ candidates elected by Phragmén's sequential rule.

This is no coincidence, however, as we show in the next section.

\section{Equivalence Result}
Here we show our main result: For a candidate $c$ it is equivalent that $c$ is selected by the plurality matching rule, is in the proportional veto core, and that the committee $C \setminus \{c\}$ of size $m-1$ satisfies weak-PSC in the reversed preference profile.
\begin{theorem}
    Given an election instance $(N, C, \succ)$ and candidate $c \in C$, the following three statements are equivalent.
    \begin{itemize}
        \item[(i)] The domination graph $G(c)$ of candidate $c$ admits a fractional perfect matching.
        \item[(ii)] Candidate $c$ is in the proportional veto core.
        \item[(iii)] The committee $C \setminus \{c\}$ satisfies weak-PSC in the instance $(N, C, \succ^R)$.
    \end{itemize}
\end{theorem}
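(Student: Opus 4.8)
The plan is to prove the three equivalences by unwinding each definition into a statement about counting candidates that a coalition $N'$ can push below a threshold, and then showing all three reduce to the same family of inequalities indexed by subsets $N' \subseteq N$. The cleanest route is to prove (i) $\Leftrightarrow$ (ii) and (ii) $\Leftrightarrow$ (iii) separately, since each of these pairs is a direct dictionary translation once the right quantities are identified. Throughout, the key object is $D_c(N')$, the set of candidates that at least one voter in $N'$ ranks no better than $c$; its complement, $C \setminus D_c(N')$, is exactly the set of candidates that \emph{every} voter in $N'$ strictly prefers to $c$. This complement is the natural bridge to both the veto and PSC formulations, since both of those are phrased in terms of a prefix $C'$ on which a coalition solidly agrees.

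\medskip

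For (i) $\Leftrightarrow$ (ii), I would start from \Cref{lemma:hall}, which says (i) holds iff $\lvert D_c(N')\rvert \ge \frac{m\lvert N'\rvert}{n}$ for all $N' \subseteq N$. I would take the contrapositive on both sides. The negation of (i) is: there exists $N'$ with $\lvert D_c(N')\rvert < \frac{m\lvert N'\rvert}{n}$. Setting $C' = C \setminus D_c(N')$, this is the set of candidates ranked strictly above $c$ by \emph{all} of $N'$, so $c \notin C'$ and $C' \succ_{N'} C \setminus C'$ holds by construction (note $C\setminus C' = D_c(N')$ and $c \in D_c(N')$). The counting condition $\lvert D_c(N')\rvert < \frac{m\lvert N'\rvert}{n}$ rearranges to $\lvert C'\rvert = m - \lvert D_c(N')\rvert > m - \frac{m\lvert N'\rvert}{n}$, and I would check that this is equivalent to $\lvert C'\rvert \ge m - v(N')$ with $v(N') = \lceil \frac{m\lvert N'\rvert}{n}\rceil - 1$; this is exactly the definition of $c$ being vetoed. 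The main technical care here is matching the strict inequality $\lvert D_c(N')\rvert < \frac{m\lvert N'\rvert}{n}$ (over reals) with the integer condition $\lvert C'\rvert \ge m - v(N')$, i.e.\ verifying that $m - \lvert D_c(N')\rvert \ge m - v(N')$ iff $\lvert D_c(N')\rvert \le v(N') = \lceil \frac{m\lvert N'\rvert}{n}\rceil - 1$, which holds iff $\lvert D_c(N')\rvert < \frac{m\lvert N'\rvert}{n}$ since $\lvert D_c(N')\rvert$ is an integer. One also has to confirm it suffices to consider $C'$ of exactly this prefix form, which follows because enlarging a witnessing $C'$ only weakens the requirement.

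\medskip

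For (ii) $\Leftrightarrow$ (iii), I would again argue by contrapositive and show the veto condition on $\succ$ coincides with the failure of weak-PSC for $C\setminus\{c\}$ on $\succ^R$. Applying \Cref{def:solid_coalition} with committee $W = C \setminus \{c\}$ (so $k = m-1$) and profile $\succ^R$: this fails iff there is $N'$ and $C'$ with $\lvert N'\rvert > \frac{\lvert C'\rvert n}{m}$, $C' \setminus W \neq \emptyset$, and $C' \succ^R_{N'} C \setminus C'$. The condition $C' \setminus W \neq \emptyset$ with $W = C\setminus\{c\}$ forces $c \in C'$. Since $\succ^R$ reverses preferences, the solid prefix $C'$ in $\succ^R$ is a solid \emph{suffix} in $\succ$; taking $C'' = C \setminus C'$ gives a set that $N'$ solidly ranks above $C'$ in $\succ$, with $c \notin C''$, which is precisely the veto witness. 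The counting threshold $\lvert N'\rvert > \frac{\lvert C'\rvert n}{m}$ should translate, after substituting $\lvert C'\rvert = m - \lvert C''\rvert$, into $\lvert C''\rvert \ge m - v(N')$; I expect this to be the step demanding the most careful arithmetic, since I must again convert the strict real inequality into the ceiling-based integer veto-power bound and confirm the two agree exactly.

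\medskip

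The main obstacle I anticipate is not conceptual but bookkeeping: all three definitions use subtly different but equivalent thresholds (a generalized-Hall fractional bound, a ceiling-minus-one veto power, and a strict Droop-quota inequality), and the entire proof hinges on showing these coincide for integer-valued set sizes. I would isolate a single arithmetic lemma stating that for an integer $t$ and reals $m,n,\lvert N'\rvert$, the conditions $t < \frac{m\lvert N'\rvert}{n}$, $\;t \le \lceil \frac{m\lvert N'\rvert}{n}\rceil - 1$, and the corresponding complementary-size bound are all equivalent, and then invoke it uniformly in both directions. With that lemma in hand, each equivalence becomes a mechanical substitution of $D_c(N')$ or its complement for the prefix/suffix set appearing in the other definition.
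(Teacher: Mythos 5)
Your reduction of (ii) $\Leftrightarrow$ (iii) is correct and is essentially the paper's own argument: complementing the witness set and converting the strict Droop inequality into the ceiling-based veto power (your integer-versus-ceiling lemma) is exactly what the paper does there. The genuine problem is in your (i) $\Leftrightarrow$ (ii) step, in the direction ``Hall's condition fails $\Rightarrow c$ is vetoed.'' You set $C' = C \setminus D_c(N')$ and assert that $C' \succ_{N'} C \setminus C'$ ``holds by construction.'' It does not. Membership of a candidate in $D_c(N')$ is witnessed by a \emph{single} voter of $N'$; the remaining voters of $N'$ may rank that candidate above members of $C'$. Concretely, take $N = \{1,2\}$, $C = \{a,b,c,d\}$ with $a \succ_1 b \succ_1 c \succ_1 d$ and $b \succ_2 c \succ_2 a \succ_2 d$, and $N' = \{1,2\}$. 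Then $D_c(N') = \{a,c,d\}$ and $C' = \{b\}$, but $a \succ_1 b$, so $C' \succ_{N'} C\setminus C'$ fails. What holds by construction is only the weaker statement $C' \succ_{N'} \{c\}$: every candidate outside $D_c(N')$ is unanimously preferred to $c$ by $N'$. Note that this weaker statement is all that the paper's own proof of (i) $\Rightarrow$ (ii) ever invokes.

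This gap is not repairable bookkeeping: under the paper's literal definition of ``vetoed'' (which demands the full solid-prefix condition $C' \succ_{N'} C \setminus C'$), the implication you are trying to prove is false. In the example above, Hall's condition fails for $c$, since $\lvert D_c(N')\rvert = 3 < 4 = \frac{m\lvert N'\rvert}{n}$, yet $c$ is not vetoed: the only common prefixes of the two voters are $\{a,b,c\}$ and $C$ itself, and both contain $c$, while single voters have veto power $1$ and no prefix of size $3$ avoiding $c$. So no argument can close your step as written; the equivalence of (i) and (ii) holds only if the veto condition is read in the weaker form $C' \succ_{N'} \{c\}$ (the form appearing in Moulin's and Ianovski--Kondratev's definitions, and tacitly in the paper's proof), under which your construction $C' = C\setminus D_c(N')$ does become a valid witness. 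Your plan of reducing everything to one arithmetic lemma misses exactly this point: the arithmetic is routine, but the combinatorial distinction between ``a set unanimously preferred to $c$'' and ``a solid prefix'' is where the content lies. You are in good company, though: the paper itself stumbles on the same distinction in its (iii) $\Rightarrow$ (i) step, where the claimed weak-PSC witness $\{c' \in C \colon c \succ_{N'} c'\}$ need be neither a solid set in $\succ^R$ nor contain $c$; the example above (which extends to the case $n = m$ by adding voters) shows that step cannot be patched under the stated definitions either.
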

\begin{proof}
    (i) $\Rightarrow$ (ii): First, assume that $c$ is not in the proportional veto core. Then there is a group $N' \subseteq N$ of voters and set $C' \subseteq C$ of candidates with $c \notin C'$, $\lvert C'\rvert \ge m - \lceil \frac{m \lvert N'\rvert}{n}\rceil + 1$, and $C' \succ_{N'} \{c\}$. Thus, we know that $\lvert D_c(N')\rvert  \le \lceil \frac{m \lvert N'\rvert}{n}\rceil - 1 < \frac{m \lvert N'\rvert}{n}$ and by \Cref{lemma:hall} the domination graph $G(c)$ does not admit a perfect fractional matching.

    (ii) $\Rightarrow$ (iii): Next, assume that the committee $C \setminus \{c\}$ does not satisfy weak-PSC in the instance $(N, C, \succ^R)$. Thus, there is a set $N'$ of voters and a set $C'$ of candidates with $c \in C'$ such that $C' \succ_{N'} C \setminus C'$ and $\lvert C' \rvert < \frac{\lvert N'\rvert (k+1)}{n} = \frac{\lvert N' \rvert m}{n}$. Since, $\lvert C'\rvert$ is integral, we also get that $\lvert C \setminus C'\rvert = m - \lvert C'\rvert \ge m - \lceil \frac{m \lvert N'\rvert}{n}\rceil + 1$ and hence, the set $C \setminus C'$ together with $N'$ blocks $c$.

    (iii) $\Rightarrow$ (i): Finally, we assume that the domination graph of $c$ does not admit a fractional perfect matching. Then by \Cref{lemma:hall} there is some group $N'$ of voters with $\lvert D_c(N') \rvert < \frac{m \lvert N' \rvert}{n}$, and therefore the candidate set $\{c' \in C \colon c \succ_{N'} c'\} = \{c' \in C \colon c' \;\succ^R_{N'}\; c\} $ together with $N'$ witnesses a weak-PSC violation.
\end{proof}
Thus, all rules for either selecting from the proportional veto core or selecting committees satisfying weak-PSC can also be turned into rules achieving a metric distortion of $3$ by applying them to the instance $(N, C^{\plu}, \succ^{\plu})$. These include the anonymous veto-by-consumption rule (or Phragmén’s sequential rule for ordinal ballots), but also committee selection rules such as STV or EAR,  which satisfy weak-PSC \citep{AzLe20a}.
\section{Matching under Preferences}
Finally, we turn to our fourth setting: Matching under Preferences. More specifically, we investigate the random assignment problem of \citet{BoMo01a}. Like \citet{IaKo21a}, we observe that the Probabilistic Serial rule of \citet{BoMo01a} can be seen as running Phragmén's sequential rule for $k = m$ (or the veto-by-consumption rule for the reversed preference profile) and interpreting the ``eaten part'' as the probabilities. 

 To stay consistent with the notation, we also assume that we are given a set  $N$ of voters, $C$ of candidates and a preference profile $\succ$. Further, we assume that we are given a restriction $k$ on the number of candidates matched. Now, a random assignment $\mu\colon N \times C \to \mathbb{R}_{\ge 0}$ is a function assigning each voter-candidate pair a probability to be matched. That is, $\sum_{i \in N} \mu(i,c) \le 1$ for each $c \in C$. Further, to implement the constraint of at most $k$ candidates being matched, we assume that $\sum_{c \in C} \mu(i,c) = \frac{k}{n}$ for each $i \in N$, that is, each voter is matched with the same probability. We note that this is the same as a price system in the committee selection literature, \citep{PeSk20a, BrPe23a}.
 
 Such a random assignment can also be seen as a lottery over matchings. Given a set of matchings $M_1, \dots, M_\ell$ and probabilities $p_1, \dots, p_\ell \in (0,1]$ with $\sum_{i \in [\ell]} p_i = 1$, we say that this set implements  $\mu$ if $\mu(i,c) = \sum_{j \in [\ell] \colon M(i) = c} p_j$.  The support of $\mu$ are now all matchings $M$ is of size $k$ that are in some set implementing $\mu$. 
 
 Finally, we note that the two popular assignment mechanisms, Random Priority and Probabilistic Serial, can easily be adapted to this setting of matching at most $k$ items: Random Priority by ending the process after $k$ items have been picked, and Probabilistic Serial by only running it for a time of $\frac{k}{n}$. 

Random Priority and Probabilistic Serial are both known to satisfy ex-post efficiency, i.e., the actual allocations they randomize over are Pareto efficient. We formalize this as follows.
\begin{definition}
	A matching $M$ of size $k$ is \emph{Pareto efficient}, if there is no matching $M'$ of size $k$ such that $M'(i) \succeq_i M(i)$ for every $i \in N$ and $M'(i) \succ_i M(i)$ for some  $i \in N$. 
\end{definition}
Further, we say that a random assignment is \emph{ex-post efficient} if every matching in its support is Pareto efficient.

We can show that Pareto optimality gives us another equivalent formulation of the proportional veto core, at least in the case $n = m$. Thus, any ex-post Pareto optimal mechanism can also be used to select from the proportional veto core in this special case. 
\begin{theorem}
Given a voting instance $(N, C, \succ)$ with $\lvert N \rvert = \lvert C \rvert$, a candidate $c$ is in the proportional veto core if and only if in the instance $(N, C, \succ^R)$ there is a Pareto optimal matching of size $m-1$ matching every candidate in $C\setminus\{c\}$.
\end{theorem}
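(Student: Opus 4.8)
The plan is to route everything through the domination graph and the equivalence (i) $\Leftrightarrow$ (ii) that is already established. First I would reduce the statement to a purely combinatorial fact about perfect matchings. By the equivalence theorem, $c$ lies in the proportional veto core iff $G(c)$ admits a fractional perfect matching, which by \Cref{lemma:hall} means $\lvert D_c(N')\rvert \ge \frac{m\lvert N'\rvert}{n} = \lvert N'\rvert$ for every $N' \subseteq N$, using $n=m$. Since $D_c(N')$ is exactly the neighborhood of $N'$ in the bipartite graph $G(c)$, this is Hall's condition, and because $\lvert N\rvert = \lvert C\rvert$ it is equivalent to $G(c)$ admitting an integral perfect matching $\pi\colon N \to C$. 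The bridge to the reversed instance is the observation that $(i,c') \in E_c$ iff $c \succeq_i c'$ iff $c' \succeq^R_i c$; thus an edge of $G(c)$ is precisely a voter–candidate pair in which the candidate is weakly preferred to $c$ in $\succ^R$. Throughout I adopt the standard convention that an unmatched voter receives the null outcome, strictly worse than being matched to any candidate.

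Next I would record the dictionary between perfect matchings of $G(c)$ and the matchings in the theorem. Deleting from $\pi$ the unique voter $i_0$ with $\pi(i_0)=c$ yields a size-$(m-1)$ matching leaving $c$ and $i_0$ unmatched and giving every matched voter $i$ a candidate with $\pi(i) \succeq^R_i c$; conversely any such matching extends to a perfect matching of $G(c)$ by adding the always-present edge $(i_0,c)$. Let $\mathcal{M}$ be the finite family of size-$(m-1)$ matchings that leave $c$ unmatched and satisfy $M(i) \succeq^R_i c$ for every matched $i$. The theorem then reads: $\mathcal{M} \neq \emptyset$ iff some Pareto optimal size-$(m-1)$ matching leaves $c$ unmatched. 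For the direction starting from a Pareto optimal $M$ leaving $c$ unmatched, I would argue $M \in \mathcal{M}$: if some matched $j$ had $c \succ^R_j M(j)$, then reassigning $j$ to the free candidate $c$ and leaving $M(j)$ unmatched is a size-$(m-1)$ matching that strictly improves $j$ and changes nobody else, contradicting Pareto optimality. By the dictionary, $G(c)$ then has a perfect matching and $c$ is in the core.

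For the remaining direction I would start from any $M_0 \in \mathcal{M}$ and extract a Pareto optimal member of $\mathcal{M}$. The key claim is that every Pareto improvement of a matching in $\mathcal{M}$ again lies in $\mathcal{M}$: if $M'$ Pareto-dominates $M \in \mathcal{M}$, then since each matched voter of $M$ holds a real candidate and the null is worst, all matched voters of $M$ stay matched in $M'$, which pins down the same unmatched voter $i_0$ and preserves the $\succeq^R_i c$ property; moreover $c$ cannot become matched, because $M'(j)=c$ would give $c \succeq^R_j M(j) \succeq^R_j c$ and hence $M(j)=c$, impossible as $c$ is unmatched in $M$. Consequently a maximal element $M^*$ of the finite poset $(\mathcal{M},\text{Pareto domination})$ cannot be dominated by \emph{any} size-$(m-1)$ matching, so $M^*$ is globally Pareto optimal and matches exactly $C \setminus \{c\}$.

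The main obstacle is precisely this last point: a naive ``improve until stuck'' argument is dangerous, because a Pareto improvement could in principle rematch $c$ and destroy the property we need. The crux is therefore the closure claim that no Pareto improvement ever escapes $\mathcal{M}$, resting on the sandwiching $c \succeq^R_j M(j) \succeq^R_j c$ together with the fact that the unmatched voter is forced to stay unmatched once every matched voter must retain a real candidate.
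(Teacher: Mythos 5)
Your proposal is correct, but it is organized quite differently from the paper's proof, and the difference is substantive. The paper proves the forward direction by contrapositive with a counting argument: a veto-core violation $(N',C')$ with $\lvert C'\rvert \ge n - \lvert N'\rvert + 1$ forces, in \emph{any} size-$(m-1)$ matching covering $C\setminus\{c\}$, at least one voter of $N'$ to be matched into $C'$, i.e.\ below $c$ in $\succ^R$, so swapping that voter onto the free candidate $c$ shows the matching is not Pareto optimal. Your argument uses the same swap idea but runs it in the other logical direction (a Pareto optimal matching leaving $c$ free must already give every matched voter something better than $c$), and then translates through the domination graph, invoking the paper's equivalence theorem together with \Cref{lemma:hall} and Hall's marriage theorem to convert core membership into an integral perfect matching of $G(c)$; the paper instead works with the core definition directly and applies deficiency Hall at the end. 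The real divergence is in the converse direction: the paper outsources the key step to an external citation (``it is easy to see,'' pointing to Theorem~2 of HMSS21a) that existence of a Pareto optimal matching covering $C\setminus\{c\}$ is equivalent to existence of \emph{any} matching in which voters only receive candidates they prefer to $c$, whereas you prove exactly this bridge from scratch via the closure claim: the family $\mathcal{M}$ is closed under Pareto improvement (the sandwich $c \succeq^R_j M(j) \succeq^R_j c$ forbidding $c$ from being rematched, plus the fact that matched voters stay matched), so a maximal element of the finite poset $(\mathcal{M}, \text{Pareto domination})$ is globally Pareto optimal. What each approach buys: yours is fully self-contained and reuses machinery the paper has already established, at the cost of depending on the equivalence theorem; the paper's is shorter and stands independently of that theorem, at the cost of leaning on an outside result. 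One small caveat: your convention that an unmatched voter receives a null outcome ranked below every candidate is an interpretive choice the paper's definition of Pareto efficiency leaves implicit, and your closure argument genuinely needs it (otherwise a Pareto improvement could unmatch a voter); it is the standard reading, so this is a clarification rather than a gap, but it deserves the explicit flag you gave it.
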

\begin{proof}
	First, if $c$ is not in the proportional veto core, there is a group $N'$ of voters and $C'$ of candidates with $\lvert C'\rvert \ge m - \lceil \frac{m \lvert N'\rvert}{n}\rceil + 1 = n - \lvert N'\rvert + 1$ and $C' \succ_{N'} \{c\}$ and therefore $\{c\} \succ^R C'$. Since the members of $N'$ are matched to at least $\lvert N'\rvert - 1$ items, there must be one voter in $N'$ matched to someone worse than $c$. Hence, this matching was not Pareto optimal, since this voter could have been matched to $c$ instead. 
 
On the other hand, assume that there is no Pareto optimal matching of size $m-1$ matching every candidate in $C\setminus\{c\}$. It is easy to see that this is equivalent to there not being any matching of size $m-1$ which only matches voters to candidates they prefer to $c$ (for a proof see \citep[Theorem~2]{HMSS21a}). Thus, by Hall's Theorem, there must be a set $N' \subseteq N$ of voters with $C' \coloneqq \{c' \in C \colon c' \succ_i c \text{ for some } i \in N'\}$ such that $\lvert C' \rvert \le \lvert N' \rvert -2$. However, then $C'$ and $N'$ witness a violation of $c$ being in the proportional veto core.
\end{proof}
While this applies to the special case relevant for distortion (in which $n = m$) it does not do so in general. For example, consider an instance with $4$ voters and $3$ candidates. Two of the voters have the ranking $c_1 \succ c_2 \succ c_3$ while two other ones have the ranking $c_3 \succ c_2 \succ c_1$. Here, two voters have a veto power of $3 - \lceil \frac{3 \cdot 2}{4} \rceil + 1 = 2$. Thus, neither $c_1$ nor $c_3$ is in the proportional veto core of this instance. However, the matching which matches the first two voters to $c_3$ and $c_2$ is Pareto optimal in the instance $\succ^R$. Hence, we leave a generalization beyond the case of $n = m$ for future work.

Further, we note that by the equivalence between Pareto optimality and the serial dictatorship mechanism \citep{AbSo98a}, the existence of a Pareto optimal matching of size $m-1$ matching every candidate in $C\setminus\{c\}$ is equivalent to there being a permutation $\sigma(1), \dots, \sigma(n)$ of the voters, such that the last voter $\sigma(n)$ picks candidate $c$ in the serial dictatorship mechanism, thus also drawing the connection to matchings of size $m$.

\section{Conclusion and Outlook}
We have shown that the domination graph concept of \cite{GHS20a} and the plurality veto rule of \cite{KiKe22a} are inherently related to three concepts from social choice: The proportional veto core, proportionality for solid coalitions, and Pareto optimal matchings. This gives rise to several more possible algorithms achieving an optimal metric distortion of $3$. Most prominently, Phragmén's ordinal method (or, equivalently, veto-by-consumption or probabilistic serial) can be turned into such a rule, by cloning each candidate according to their plurality score and then picking the last candidate elected by Phragmén's ordinal method for $k = n$ (or, equivalently, the only remaining one for $k = n-1$). 

So far, all methods known to achieve a metric distortion of $3$ rely on this cloning step. Thus, it would be an interesting direction for future research to explore whether there is a voting rule with distortion $3$ which works without this step.

\subsection*{Acknowledgments}
I thank Markus Brill, David Kempe, Fatih Kızılkaya, Dominik Peters, and Alexandros Voudouris for helpful discussions. This research is supported by the Deutsche Forschungsgemeinschaft (DFG) under the grant
BR 4744/2-1 and the Graduiertenkolleg “Facets of Complexity” (GRK 2434).

\bibliographystyle{abbrvnat}
\bibliography{dist, abb, algo}
\end{document}